\documentclass [12pt]{article}
\usepackage{amssymb,amsmath,amsthm}
\usepackage{graphicx}

\usepackage[cp1250]{inputenc}
\usepackage[active]{srcltx}

\setlength{\headheight}{0pt}
\setlength{\headsep}{0pt}

\long\def\comment#1{{}}

\newtheorem{thm}{\bf Theorem}

\begin{document}

\title{{\bf On Axiomatization of Inconsistency Indicators for Pairwise Comparisons}}

\author{W.W. Koczkodaj
\thanks{
Computer Science, Laurentian University,
Sudbury, Ontario P3E 2C6, Canada 
wkoczkodaj@cs.laurentian.ca 
} \\
\and R. Szwarc
\thanks{Institute of Mathematics, University of Wroclaw,
Wroclaw, Poland, szwarc2@gmail.com}}

\maketitle

\begin{abstract}

This study examines the notion of inconsistency in pairwise comparisons for providing an axiomatization for it. 
It also proposes two inconsistency indicators for pairwise comparisons. 
The primary motivation for the inconsistency reduction 
is expressed by a computer industry concept ``garbage in, garbage out''.  
The quality of the output depends on the quality of the input.  \\

\noindent Keywords: {\em pairwise comparisons, inconsistency axiomatization} \\

\end{abstract}

\section{Introduction}

The method of pairwise comparisons (PC method here) is attributed to Fechner (see \cite{Fech1860}) as a formal scientific method although it was first mentioned by Condorcet in \cite{Condorcet1785} who only used it in its primitive form: win/loss. However, Thurstone (see \cite{Thur27}) proposed what is known as ``The Law of Comparative Judgments'' in 1927. In 1977, Saaty proposed what is known as the Analytic Hierarchy Process (AHP) method based on modified pairwise comparisons with a hierarchy structure in \cite{Saaty77}. In this study, however, the hierarchy is not considered. 

Saaty's study \cite{Saaty77} had a profound impact on the pairwise comparisons research. However, his AHP should not be equalized with pairwise comparisons, despite using them. The restrictions assumed by Saaty (e.g., fixed scale: 1 to 9) probably serves its proponent well for whatever purpose he has designed it. AHP is a subset of the pairwise comparisons method which does not assume any particular scale. A proof was provided in \cite{FKS2010} that a small scale (1 to 3) has desired  mathematical properties for the use in pairwise comparisons.

It is also worth to note that  this study considers only the multiplicative PC which is based on \textit{\textit{``how many times?''}}, while the additive version of pairwise comparisons (``by how much...'')was recently analyzed in \cite{KKFY2012}. It has a different type of inconsistency (not addressed here). 

Recently, the study \cite{K2013a} presents an innovate iterative heuristic rating estimation algorithm that tries to deal with the situation when exact estimations for some concepts (stimulus) $C_{K}$ are a priori known and fixed, whilst the estimates for the others (unknown concepts $C_{U}$) need to be computed. The relationship between the local estimation error, understood
as the average absolute error $E(c)$ over all direct estimates for
the concept $c\in C_{U}$ and the pairwise comparisons matrix inconsistency index is shown. 

Regretfully, pairwise comparisons theory is not as popular as in mathematics, for example, partial differential equations, hence basic concepts need to be presented in the next section but it is not PC method experts.

\section{Pairwise comparisons basics}
\label{sec:bc}

An $N \times N$ pairwise comparison matrix simply is a square matrix $M=[m_{ij}]$ such that $m_{ij}>0$ for every 
$i,j=1, \ldots ,n$. A pairwise comparison matrix $M$ is called {\em 
reciprocal} if $m_{ij} = \frac{1}{m_{ji}}$ for every $i,j=1, \ldots ,n$
(then automatically $m_{ii}=1$ for every $i=1, \ldots ,n $). Let
us assume that:

\begin{displaymath}
M = \begin{bmatrix}
1 & m_{12} & \cdots & m_{1n} \\ 
\frac{1}{m_{12}} & 1 & \cdots & m_{2n} \\ 
\vdots & \vdots & \vdots & \vdots \\ 
\frac{1}{m_{1n}} & \frac{1}{m_{2n}} & \cdots & 1
\end{bmatrix}
\end{displaymath}

\noindent where $m_{ij}$ expresses a relative preference of entity (or stimuli) $s_i$ over $s_j$.

A pairwise comparison matrix $M$ is called consistent (or transitive) if 
$$m_{ij} * m_{jk}=m_{ik}$$ for every $i,j,k=1,2, \ldots ,n$. \\

We will refer to it as a ``consistency condition''.
While every consistent matrix is reciprocal, the converse is false in general. If the consistency condition does not hold, the matrix is inconsistent (or intransitive).
   
Consistent matrices correspond to the ideal situation in which there are the exact values $s_1, \ldots , s_n$ for the stimuli. 
The quotients $m_{ij}=s_i/s_j$ then form a consistent matrix. 
The vector $s=[s_1, \ldots s_n]$ is unique up to a multiplicative constant. 
The challenge of the pairwise comparisons method comes from the lack of consistency of the pairwise comparisons matrices which arise in practice (while as a rule, all the pairwise comparisons matrices are reciprocal). Given an $n \times n$ matrix $M$, which is not consistent, the theory attempts to provide a consistent $n \times n$ matrix $M'$ which differs from matrix $M$ ``as little as possible''. 

 The matrix: $M= s_i/s_j$ is consistent for all (even random) values $v_i$. It is an important observation since it implies that a problem of approximation is really a problem of a norm selection and the distance minimization. For the Euclidean norm, the vector of geometric means (equal to the principal eigenvector for the transitive matrix) is the one which generates it. Needless to say that only optimization methods can approximate the given matrix for the assumed norm (e.g., LSM for the Euclidean distance, as recently proposed in \cite{AZG2012}). Such type of matrix is examined in \cite{JT2011} as ``error-free'' matrix.

It is unfortunate that the singular form ``comparison'' is sometimes used considering that a minimum of three comparisons are needed for the method to have a practical meaning. Comparing two entities (stimuli or properties) in pairs is irreducible, since having one entity compared with itself gives trivially 1. Comparing only two entities ($2 \times 2$ PC matrix) does not involve inconsistency. Entities and/or their properties are often called stimuli in the PC research but are rarely used in applications.

\section{The pairwise comparisons inconsistency notion}
\label{sec:ic}

The study \cite{Saaty77} includes: ``We may assume that when the inconsistency indicator shows the perturbations from consistency are large and hence the result is unreliable, the information available cannot be used to derive a reliable answer.'' 

The above quotation is consistent with the popular computer adage GIGO (garbage in -- garbage out). GIGO summarizes what has been known for a long time: getting good results from ``dirty data'' is unrealistic, and surely, cannot be guaranteed. An approximation of a pairwise comparisons matrix is meaningful if the inconsistency is acceptable. It can be done by localizing the inconsistency and reducing it to a certain predefined threshold. For the time being, the inconsistency threshold is arbitrary or set by a heuristic, since there is no theory to find it. It is a similar situation to p-value in statistics -- often assumed as 0.05 (or any other arbitrary value), but can be undermined for each individual case. 

As pointed out earlier, given an inconsistent matrix $M$, the theory attempts to approximate it with a consistent matrix $M'$ that differs from matrix $M$ ``as little as possible''. 
The consistency of a matrix A, expressed by $m_{ij}*m_{jk}=m_{ik}$, was called in \cite{Saaty77} a ``cardinal consistency''. In this study, a term ``triad'' is used for $(m_{ij},m_{ik},m_{jk})$ (these three matrix elements in the above cardinal consistency condition).

Before progressing to a formal inconsistency definition, the most important question needs to be addressed: \textit{``where does the inconsistency come from?''}
The short answer to this question is from the excess of input data. The superfluous data comes from collecting data for all pairs combinations which is $n*(n-1)/2$, while only $n-1$ proper comparisons (e.g., the first row or column and even diagonals or some of their combinations) would suffice. The inconsistency in a triad is illustrated by the following example.

\noindent \textbf{Example:}\\
This is an inconsistent matrix $M$, $3 \times 3$ with one triad $(2,2,2)$, which is marked by the bold font, is:

\begin{displaymath}
  A = \begin{bmatrix}
    1    & \textbf{2}    & \textbf{2} \\
    1/2  & 1    & \textbf{2} \\
    1/2  & 1/2  & 1 \\
  \end{bmatrix}
\end{displaymath}

\noindent Evidently, matrix $A$ displays an abnormality since $2*2 \ne 2$.
The computed vector of weight ($s_i$ mentioned earlier in this section)is:

$$s=[0.4934, 0.3108, 0.1958]$$

\noindent The above values generate the fully consistent PC matrix B:

\begin{displaymath}
  B = \begin{bmatrix}
    1    & \textbf{1.5874011}    & \textbf{2.5198421} \\
    0.6299605  & 1    & \textbf{1.5874011} \\
    0.3968503  & 0.6299605  & 1 \\
  \end{bmatrix}
\end{displaymath}

Everything comes back to normality when  $a_{1,3}$ is changed from 2 to 4. Although this is a rather simple example, the proposed inconsistency reduction process comes to finding such a triad and changing an offending value with the value which making the consistency condition to hold or at least to have one side of the consistency condition close to the other side.

Table~\ref{fig:T7a} shows three triads consisting of matrix elements, which may not be neighbors in this matrix. Different types of parenthesis have been used for each triad, only for easier demonstration. All triads above the main diagonal have the carpenter angle tool shape or the mirror image of the capital letter ``L'', with the middle value in the ``elbow'' element ideally (for the consistency) being the product of the outer elements.

\begin{table}[h]
\centering
    \begin{tabular}{|c|c|c|c|c|c|c|}
        \hline
        ~~1~~ & ~~~~~ & (1,3) & ~~~~~ & ~~~~~ & ~~~~~ & (1,7) \\ \hline
        ~ & 1 & ~ & [2,4] & ~ & [2,6] & ~ \\ \hline
        ~ & ~ & 1 & ~ & ~ & ~ & (3,7) \\ \hline
        ~ & ~ & ~ & 1 & \{4,5\} & [4,6] & \{4,7\} \\ \hline
        ~ & ~ & ~ & ~ & 1 & ~ & \{5,7\} \\ \hline
        ~ & ~ & ~ & ~ & ~ & 1 & ~ \\ \hline
        ~ & ~ & ~ & ~ & ~ & ~ & 1 \\
        \hline
    \end{tabular}
\caption{PC matrix with various triads}
\label{fig:T7a}
\end{table}

Triads may have one overlapping matrix element. For  example, $i=1$, $j=2$, and $k=3$ creates a triad with one element in the triad created by $i=1$, $j=3$, and $k=7$. According to the triad production expression:  $(a_{ij},a_{ik},a_{jk})$,  it is element $a_{1,3}$. Evidently, triad elements do not need to be neighbors in the matrix, but if they are, they must be just above the main diagonal, as illustrated by Table~\ref{fig:T7b}. 

\begin{table}[h]
\centering
    \begin{tabular}{|c|c|c|c|c|c|c|}
        \hline
        ~~1~~ & (1,2) & (1,3) & ~ & ~ & ~ & ~ \\ \hline
        ~     & 1 & (2,3) & (2,4) & ~ & ~ & ~ \\ \hline
        ~     & ~ & 1 & (3,4) & (3,5) & ~ & ~ \\ \hline
        ~     & ~ & ~ & 1 & (4,5) & (4,6) & ~ \\ \hline
        ~     & ~ & ~ & ~ & 1 & (5,6) & (5,7) \\ \hline
        ~     & ~ & ~ & ~ & ~ & 1 & (6,7) \\ \hline
        ~     & ~ & ~ & ~ & ~ & ~ & 1 \\
        \hline
    \end{tabular}
\caption{All triads in a $7 \times 7$ matrix with elements which are neighbors}
\label{fig:T7b}
\end{table}

Inconsistent assessments cannot be accurate but after approximation, they may be closer to real values. 
Let us assume that the triad $(2,5,3)$ in Fig.~\ref{fig:triad253} reflects comparisons of three bars with lengths: A, B, and C made by experts on three different continents by the Internet.
Expert 1 compares A to B giving $A/B=3$ and Expert 2 compares B to C giving $B/C=2$.
One could object to $A/C=5$ given by Expert 3 after A to C are compared. Evidently, $A/B * B/C$ is  $A/C$, hence the result is $2*3=6$. However, we really do not know and will never know who made an estimation error! In fact, we can safely assume that each expert made ``just a little bit of error''. In particular, none of these three values could be accurate. It cannot be solved by any theory. A solution is needs to be found on individual basis for each application.

\begin{figure}[htb]
\centering
\includegraphics[scale=0.5]{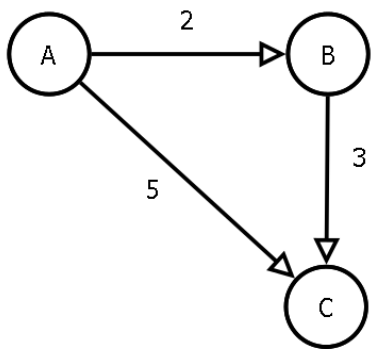}
\caption{A graphical representation of the triad (2,5,3)}
\label{fig:triad253}
\end{figure}

In this study, the approximation error (the most common in science and engineering) will be used and presented as a percentage. It will be simply called ``the error''. 
The approximation error in inaccurate data is the discrepancy between an exact value and some approximation to it.

Given some value $v$ and its approximation $v_{approx}$, the absolute error is: $\Delta = | v - v_{approx}|$
where the vertical bars denote the absolute value. 
For $v \neq 0$, the approximation error is defined as:

$$\delta = { | v - v_{approx}| \over | v | } =
 \left|  { v - v_{approx} \over v } \right|  = 
 \left|   1 - {v_{approx} \over v } \right| 
$$

Each triad generates a PC matrix $M$ of the size $3 \times 3$. Let us use A, B, and C to reflect lengths of three bars. The value $M[1,2]=1$ represents $A = B$, $M[2,3]=1$ represents $B = C$ hence the expectation is $A=C$ but the third estimates is 5. It is reflected by the last bar hence the error is 500\%. As assumed, $x$ can take any arbitrary value and so can the estimation error.  
For small values of $n$, the maximum value of the error, still acceptable by the eigenvalue-based inconsistency, has been presented in Tab.~\ref{tab:err}. PC matrix with triads $(1,x,1)$ is of a considerable importance and it is analyzed in Section~\ref{sec:cpc}.

\section{Axiomatization of inconsistency}

It is generally assumed that it was Saaty who in \cite{Saaty77} defined PC matrix $A$ as consistent if and only if 
$a_{ij} *  a_{jk} = a_{ik}$ for $i,j,k = 1, 2,..., n$. 
However, inconsistency was defined and examined before 1977, by at least these four studies published between 1939 and 1961: \cite{KSB1939, H1953, GS1958, S1961}. To our knowledge, no axiomatization has ever been proposed for the general case of pairwise comparisons matrix with real positive entries, although it seems that attempts have been made for matrices with integer values  for win-tie-loss entries. \\

The common sense expectations for the inconsistency indicator $ii$ of a triad $T=(x,y,z)$ are:
\begin{enumerate}
    \item $ii=0$ for $y=x*z$,
	 \item $ii \in [0,1)$ - by common sense, wan ``ideal inconsistency'' cannot be  achieve,
    \item for a consistent triad $ii(x,y,z)=0$ with $xz=y$, increasing or decreasing $x,y,z$ results in increasing $ii(x,y,z)$.

\end{enumerate}

The third axiom is crucial for any axiomatization.
Without this axiom, an inconsistency indicator would not make practical sense. For any assumed definition for inconsistency, an inconsistency indicator of a triad $T'=(x',y',z')$ cannot be smaller than of $T=(x,y,z)$ if it is worse by one of more coordinates, which is what the third axiom is about. That is, $ii(x',y',z') \ge ii(x,y,z)$.
It is a reasonable expectation that the worsening of a triad, used in the definition of consistency (also in \cite{Saaty77}), cannot make the entire matrix more consistent.

\noindent For $ii(x,y,z) > 0$, there are two cases:

\begin{description}
    \item[(a)] $xz<y$
    \item[(b)] $xz>y$
\end{description}

\noindent In case of:
\begin{description}
    \item[(a)] if $x'z'< xz \& y'>y$ then $ii(x,y,z)< ii(x',y',z')$ 
    \item[(b)] if $x'z'> xz \& y'<y$ then $ii(x,y,z)< ii(x',y',z')$ 
\end{description}

\noindent Let us look at the following two examples:

\begin{itemize}
\item 
$ii(1.5,2,2.5)$ will increase if 1.5 or 2.5 are increased, since 1.5*2.5 is already greater than 2. On the other hand, decreasing 2 should also increase the inconsistency.
\item $ii(1.5,2.5,1.2)$ will increase if  2.5 in increased, since it is greater than 1.5*1.2=1.8, but decreasing 1.5 or 1.2 should also increase inconsistency for the same reason.
\end{itemize}

\noindent Based on the proposed axioms for inconsistency and \cite{Kocz93}, let us define:

$$f(x,y,z) = 1-\min\left \{\frac{y}{xz},\frac{xz}{y}\right \}$$. 

\noindent It is equivalent to:

$$f(x,y,z)=1- e^{-\left|\ln\left (\frac{y}{xz}\right )\right |}$$.

\noindent The expression $|\ln(\frac{y}{xz})|$
is the distance of the triad $T$ from 0. When this distance increases, the $f(x,y,z)$ also increases. It is important to notice here that this definition allows us to localize the inconsistency in the matrix PC and it is of a considerable importance for most applications.  

Another possible definition of the inconsistency has a global character and needs a bit more explanations.  Let $A=\{a_{ij}\}_{i,j=1}^n$ be a reciprocal positive matrix. The matrix $A$ is consistent if and only if for any $1\le i<j\le n$ the following equation holds:

$$a_{ij}=a_{i,i+1}a_{i+1,i+2}\ldots a_{j-1,j}.$$
Therefore, the inconsistency indicator of $A$ can be also defined as:
$${\rm ii}(A)=1-\min_{1\le i<j\le n}  \min\left ({a_{ij}\over a_{i,i+1}a_{i+1,i+2}\ldots a_{j-1,j}},\,
 {a_{i,i+1}a_{i+1,i+2}\ldots a_{j-1,j}\over a_{ij}} \right ) $$

\noindent It is equivalent to:

$${\rm ii}(A)=1-\max_{1\le i<j\le n} 
   \left (1 - e^{-\left |\ln \left ( {a_{ij}\over 
   a_{i,i+1}a_{i+1,i+2}\ldots  a_{j-1,j}}  \right )\right |}\right ) 
$$

Both $ii$ definitions have some advantages and disadvantages. The first definition allows us to find the localization of the inconsistency. The second definition may be useful when the global inconsistency is more important. The first definition follows what is adequately described by the idiom: ``one bad apple spoils the barrel''. A hybrid of using two definitions may be a practical solution in applications. Alternatively, both definitions can be used in a sequence.

\section{The analysis of $CPC(x,n)$ matrix}
\label{sec:cpc}

In this section, a pairwise matrix with all 1s except for two corners (called ``corner comparisons matrix or CPC'') is analyzed. 
Consider the matrix $CPC(x,n),$ with $x>1,$ defined by
\begin{displaymath}
CPC(x,n)=\begin{bmatrix}1&1&\cdots&1&x\\
1&1&\cdots&1&1\\
\vdots &\vdots &\ddots&\vdots &\vdots\\
1&1&\cdots&1&1\\
x^{-1}&1&\cdots&1&1\\\end{bmatrix}\in M_{n\times n}(\mathbb R)
\end{displaymath}
By the Perron-Frobenius theorem, the principal eigenvalue $\lambda_{\rm max}$ corresponds to a unique (up to constant multiple) eigenvector $w=\{w_i\}_{i=1}^n$ with positive entries. Since the rows $r_2,r_3\ldots, r_{n-1}$ of the matrix $CPC(x,n)$ are equal the eigenvector, $w$ satisfies $w_2=w_3=\ldots =w_{n-1}.$ After normalization, it may be assumed that
$$ w=(a,1,1,\ldots, 1, b).$$
The eigenvalue equation $CPC(x,n) w=\lambda_{\rm max}  w$ is reduced to the system of three equations with three unknown $a,\,b$ and $\lambda_{\rm max}.$
\begin{eqnarray*}
a+ n-2 + bx&=&\lambda_{\rm max} a,\\
a+n-2 + b&=&\lambda_{\rm max},\\
{a\over x} + n-2 + b&=&\lambda_{\rm max} b.
\end{eqnarray*}
By solving the system consisting of the first and the last linear equations, relative to $a$ and $b,$ we get
$$a=(n-2)\,{{x+\lambda_{\rm max} -1\over \lambda_{\rm max}^2-2\lambda_{\rm max}}},\quad  
b=(n-2)\,{  x^{-1}+\lambda_{\rm max} -1\over \lambda_{\rm max}^2-2\lambda_{\rm max}}.$$
Substituting $a$ and $b$ in the second equation by the above expressions (after some transformations), the following third degree equation for $\lambda_{\rm max}$ is obtained:
\begin{equation}
\lambda_{\rm max}^3-n\lambda_{\rm max}^2= (n-2)(x^{-1}+x-2).
\end{equation}
It can still be transformed that into
$$
{\lambda_{\rm max}-n\over n-1}={n-2\over n-1}\,{x^{-1}+x-2\over \lambda_{\rm max}^2}.
$$
Since the right hand side is positive, we must have $\lambda_{\rm max}>n.$ \\
Therefore
\begin{equation}\label{ineq1}
{\lambda_{\rm max}-n\over n-1}\le {n-2\over n-1}\,{x^{-1}+x-2\over n^2}.
\end{equation}

\noindent It has been assumed that $x>1$ therefore 
$x^{-1} <1$ 

\noindent also

$${n-2\over n-1} < 1$$

\noindent hence the following inequality holds:

\begin{equation}\label{ineq}
{\lambda_{\rm max}-n\over n-1}\le {x\over n^2}.
\end{equation}

The inequality (3) has a very important implication. No matter how large $x$ is, there is always such $n$ that the left hand side of (3) is as small as it can be assumed. So, regardless of the assumed threshold in \cite{Saaty77} (de facto, originally set to 10\%), the matrix is acceptable according to the consistency rule set in \cite{Saaty77}.

Evidently, the arbitrarily large $x$ in the matrix $CPC(x,n)$ of size $n \times n$ invalidates the acceptability of this matrix. Hence, by a \textit{reductio ad absurdum}, the soundness of the eigenvalue-based inconsistency indicator represented by the left hand side inequality (3) must be 
dismissed. \\

\noindent {\bf Example:}\\

For $n=6$ and $x=6$:
$${\lambda_{\rm max}-n\over n-1} \le {4\over 5}\, {4+(1/6)\over 36}=0.0925925...$$
Actually, we can determine numerically that $\lambda_{\rm max}= 6.406123...$ \\
Then
$$ {\lambda_{\rm max}-n\over n-1}=0.081224...
$$

Now, general reciprocal matrices will be considered. By a careful analysis of \cite{Saaty77}, the following lower estimates for
$\lambda_{\rm max}$ for general reciprocal positive matrices are obtained:
\begin{thm}
Let $A=\{a_{ij}\}_{i,j}^n$ be a
reciprocal matrix with positive entries.
Then
$$\lambda_{\rm max}\ge n +  {1\over 3n}\,{ {\rm ii}^2(A)\over{\sqrt[3]{ 1-{\rm ii}(A) }}} ,$$
where
$${\rm ii}(A) =1 -\min_{i<k<j}\min \left \{ {a_{ij}\over a_{ik}a_{kj}},{a_{ik}a_{kj}\over a_{ij}}\right \} .$$
\end{thm}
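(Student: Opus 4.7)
My plan combines the exact Perron formula for a $3\times 3$ reciprocal matrix (applied to the submatrix carrying the worst triad) with a Newton-identity computation of the cubic power trace, and then translates the resulting lower bound on $\text{tr}(A^3)$ into the desired lower bound on $\lambda_{\max}$.

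First, I would locate the triad: let $i<k<j$ realize the minimum in ${\rm ii}(A)$, write $\mu=a_{ik}a_{kj}/a_{ij}$, and after possibly inverting assume $\mu=t:=1-{\rm ii}(A)\le 1$. The principal $3\times 3$ submatrix $A_3$ on $\{i,k,j\}$ is reciprocal, so (as in the analysis of $CPC(x,n)$) its characteristic polynomial reduces to $\lambda^3-3\lambda^2-(\mu+\mu^{-1}-2)=0$ and, via the substitution $\lambda=1+s+s^{-1}$, has Perron root $\lambda_{\max}(A_3)=1+t^{1/3}+t^{-1/3}$; the corresponding $3\times 3$ principal minor of $A$ equals $\det A_3=(1-t)^2/t$. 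Since $A$ is reciprocal, each principal $2\times 2$ minor $1-a_{ij}a_{ji}$ vanishes, so the second elementary symmetric function of the spectrum is $e_2(A)=0$; together with $s_1=\text{tr}(A)=n$ and $s_2=\text{tr}(A^2)=\sum_{i,j}a_{ij}a_{ji}=n^2$, Newton's identities give
\[
\text{tr}(A^3)=n^3+3e_3(A),\qquad e_3(A)=\sum_{i<k<j}\bigl(\mu_{ikj}+\mu_{ikj}^{-1}-2\bigr)\ge\frac{(1-t)^2}{t}.
\]

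To pass from this trace estimate to $\lambda_{\max}$, dividing $p(\lambda)=\det(\lambda I-A)$ by $\lambda^{n-3}$ and evaluating at $\lambda_{\max}$ gives
\[
\lambda_{\max}^{2}(\lambda_{\max}-n)=e_3(A)-\frac{e_4(A)}{\lambda_{\max}}+\frac{e_5(A)}{\lambda_{\max}^{2}}-\cdots+\frac{(-1)^{n-1}e_n(A)}{\lambda_{\max}^{\,n-3}}.
\]
I would establish two auxiliary estimates: the lower bound $\lambda_{\max}^{2}(\lambda_{\max}-n)\ge e_3(A)$, equivalently that the alternating tail on the right is nonnegative, and the upper bound $\lambda_{\max}\le \sqrt{3n}\,t^{-1/3}$, which is the $n\times n$ counterpart of the $3\times 3$ inequality $1+t^{1/3}+t^{-1/3}\le 3t^{-1/3}$, obtained by Collatz--Wielandt testing against a positive vector built from the Perron eigenvector of $A_3$ extended onto the remaining indices. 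Combined with $(1-t)=(1-t^{1/3})(1+t^{1/3}+t^{2/3})$ and the elementary estimate $1+t^{1/3}+t^{2/3}\le 3$ for $t\in(0,1]$, these yield
\[
\lambda_{\max}-n\ge\frac{e_3(A)}{\lambda_{\max}^{2}}\ge\frac{(1-t)^2}{t\,\lambda_{\max}^{2}}\ge\frac{(1-t)^2}{3n\,t^{1/3}}=\frac{{\rm ii}^{2}(A)}{3n\sqrt[3]{1-{\rm ii}(A)}},
\]
which is the claim.

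The hard part is establishing the lower bound $\lambda_{\max}^{2}(\lambda_{\max}-n)\ge e_3(A)$, because for $n\ge 4$ the higher elementary symmetric functions $e_4(A),\dots,e_n(A)$ of a general reciprocal positive matrix need not be individually signed, and one must show that the specific alternating combination $\sum_{k\ge 4}(-1)^{k-3}e_k(A)/\lambda_{\max}^{k-3}$ is nonnegative. Equivalently one wants $\prod_{i\ge 2}(\lambda_{\max}-\lambda_i)\ge 0$, which uses Perron--Frobenius ($|\lambda_i|\le \lambda_{\max}$), the fact that the non-Perron eigenvalues come in complex-conjugate pairs, and the global identities $s_1=n$, $s_2=n^2$ to pin the signs down. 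This is the step at which the "careful analysis of \cite{Saaty77}" referenced in the statement is invoked, and where the bulk of the technical work lies.
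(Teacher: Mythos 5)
Your reduction of the problem to the two auxiliary estimates is where the argument breaks, and both estimates are in fact false. First, the inequality $\lambda_{\max}^{2}(\lambda_{\max}-n)\ge e_3(A)$ does not hold for general reciprocal positive matrices. For $n=4$ the characteristic polynomial gives the exact identity $\lambda_{\max}^{2}(\lambda_{\max}-4)=e_3(A)-\det A/\lambda_{\max}$ (using $e_1=n$, $e_2=0$), so your inequality is equivalent to $\det A\le 0$; but the reciprocal matrix
$$A=\begin{pmatrix}1&1&1&1\\ 1&1&2&1/2\\ 1&1/2&1&3\\ 1&2&1/3&1\end{pmatrix}$$
has $\det A=11/6>0$. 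Thus the alternating tail you need to be nonnegative can be negative, and your proposed reformulation of it as $\prod_{i\ge 2}(\lambda_{\max}-\lambda_i)\ge 0$ is not equivalent (that product is $p'(\lambda_{\max})$, which is always positive, a different quantity). Second, the upper bound $\lambda_{\max}\le\sqrt{3n}\,t^{-1/3}$ fails already for consistent matrices of size $n\ge 4$, where $\lambda_{\max}=n>\sqrt{3n}$; and for $FPC(x,n)$ with fixed $x>1$ and large $n$ one has $\lambda_{\max}\ge n$ while $\sqrt{3n}\,t^{-1/3}=O(\sqrt{n})$, so it fails by an arbitrarily large factor. Without that bound your chain only yields $\lambda_{\max}-n\ge e_3(A)/\lambda_{\max}^{2}$, and since $\lambda_{\max}$ can be of order $n$ or larger this is strictly weaker than the stated estimate with the constant $1/(3n)$. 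Your preparatory computations ($e_2(A)=0$, $s_2=n^2$, $e_3(A)=\sum_{i<k<j}(\mu_{ikj}+\mu_{ikj}^{-1}-2)\ge (1-t)^2/t$) are correct, but the path from them to the theorem is blocked at both of the steps you deferred.

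The paper's proof avoids the characteristic polynomial and spectral symmetric functions altogether. It starts from the Perron eigenvector $w$ and the identity
$$n(\lambda_{\max}-n)=\sum_{1\le i<j\le n}\left(a_{ij}\frac{w_j}{w_i}+a_{ji}\frac{w_i}{w_j}-2\right),$$
in which every summand is nonnegative; it then discards all but the three summands belonging to the worst triad $s<u<t$ and minimizes the resulting two-variable function $x+x^{-1}+y+y^{-1}+\alpha^{-1}xy+\alpha x^{-1}y^{-1}-6$ over $x,y>0$, obtaining the value $3(\alpha^{1/3}+\alpha^{-1/3})-6\ge\frac{1}{3}\alpha^{-1/3}(1-\alpha)^2$, which gives the theorem directly. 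The crucial difference is that the eigenvector weights $w_i$ absorb all the information about the rest of the matrix, so no upper bound on $\lambda_{\max}$ and no control of the higher minors is ever needed. If you want to salvage your approach, this is the mechanism you would have to reproduce.
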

\begin{proof}
Let $w=\{w_i\}_{i=1}^n$ be the eigenvector corresponding to the eigenvalue $\lambda_{\rm max}.$
By the Perron-Frobenius theory, we have $w_i>0.$    
Thus
$$\lambda_{\rm max}\, w_i= \sum_{j=1}^n a_{ij}w_j.$$
By an easy transformation and the fact that $a_{ii}=1$ (see  \cite{Saaty77}, pages 237-238), we get
$$
n\lambda_{\rm max}-n=\sum_{1\le i<j\le n} \left (a_{ij}{w_j\over w_i} + a_{ji}{w_i\over w_j}\right ).$$
This implies 
\begin{equation}\label{sum}
n(\lambda_{\rm max}-n) = \sum_{1\le i<j\le n} \left (a_{ij}{w_j\over w_i} + a _{ji}{w_i\over w_j}-2\right )
\end{equation}
Let us assume that the maximal inconsistency is attained at the triad $s<u<t,$ i.e.
$$ {\rm ii}(A)=1-\min \left \{ {a_{st}\over a_{su}a_{ut}},{a_{su}a_{ut}\over a_{st}}\right \}.$$
Every term in the sum of (\ref{sum}) is nonnegative as $x+x^{-1}-2\ge 0, $ for $x>0$ and $a_{ji}=a_{ij}^{-1}.$
By reducing the sum to three terms corresponding to the triad $s<u<t$, we get
\begin{equation}\label{three}
n(\lambda_{\rm max}-n)\ge a_{su}{w_u\over w_s} + a_{us}{w_s\over w_u}+a_{ut}{w_t\over w_u} + a_{tu}{w_u\over w_t}
+a_{st}{w_t\over w_s} + a_{ts}{w_s\over w_t}-6. 
\end{equation} 
Denote
$$ x=a_{su}{w_u\over w_s},\ y= a_{ut}{w_t\over w_u}, \ \alpha= {a_{su}a_{ut}\over a_{st}}.$$ Then
the right hand side of (\ref{three}) is given by
$$ f(x,y):=x+x^{-1}+y+y^{-1} +\alpha^{-1}xy + \alpha x^{-1}y^{-1} -6.$$
By calculating the partial derivatives of $f(x,y)$ and equating them to zero, we can easily determine
that the minimal value of $f(x,y)$ is attained for
$$x=y=\alpha^{1/3}.$$ 
We will consider the case $\alpha\le 1,$ i.e. ${\rm ii}(A)=1-\alpha$ (the other case $\alpha>1$ can be dealt with similarly). We have
\begin{multline*}f(x,y)\ge 3(\alpha^{1/3}+\alpha^{-1/3}) -6 =3\alpha^{-1/3}(1-\alpha^{1/3})^2
 \\
=3\alpha^{-1/3}\left ({1-\alpha\over 1+\alpha^{1/3}+\alpha^{2/3}}\right )^2\ge {1\over 3}\alpha^{-1/3}(1-\alpha)^2={1\over 3}\, { {\rm ii}^2(A)\over{\sqrt[3]{ 1-{\rm ii}(A) }}}.
\end{multline*}
Summarizing, we get
$$n(\lambda_{\rm max}-n)\ge  {1\over 3}\,{ {\rm ii}^2(A)\over{\sqrt[3]{ 1-{\rm ii}(A) }}},
$$ which yields the conclusion.
 \end{proof}
 \noindent{\bf Remark.} Theorem 1 yields
 $${\lambda_{\rm max}-n\over n-1}\ge  {1\over 3(n-1) n}\,{ {\rm ii}^2(A)\over{\sqrt[3]{ 1-{\rm ii}(A) }}}.$$
 Thus for given $n$ (say $n=6$), the quantity explodes if the indicator ${\rm ii}(A)$ approaches the value 1.

Another lower estimate for $\lambda_{\rm max}$ can be obtained. It takes into account the total inconsistency information of the matrix $A.$
\begin{thm} Let $T$ denote the set of all triads in the matrix $A$ and ${\rm ii}(t)$ be the inconsistency indicator of the triad $t,$
i.e. for $t=(i,k,j)$ with $i<k<j,$ let
$${\rm ii}(t)= 1 -\min\left \{ {a_{ij}\over a_{ik}a_{kj}},{a_{ik}a_{kj}\over a_{ij}}\right \}.$$
Then
$$ \lambda_{\rm max}\ge n + {1\over 3n(n-2)}\sum_{t\in T}{ {\rm ii}^2(t)\over{\sqrt[3]{ 1-{\rm ii}(t) }}}  .$$ 
\end{thm}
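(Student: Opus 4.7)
My plan is to reuse the main identity and the single triad lower bound from the proof of Theorem 1, but aggregate the argument over all triads instead of specializing to the one of maximal inconsistency.

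First I would invoke equation (4) from the previous proof, which says
$$ n(\lambda_{\rm max}-n) = \sum_{1\le p<q\le n} c_{pq}, \qquad c_{pq}:=a_{pq}\frac{w_q}{w_p}+a_{qp}\frac{w_p}{w_q}-2\ge 0, $$
where $w$ is the Perron eigenvector. Since $c_{pq}\ge 0$ for every pair, the identity will allow me to regroup and bound this sum triad by triad without sign issues.

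Next, for each triad $t=(i,k,j)$ with $i<k<j$ I would rerun the computation from Theorem 1, setting
$$ x=a_{ik}\frac{w_k}{w_i},\quad y=a_{kj}\frac{w_j}{w_k},\quad \alpha=\frac{a_{ik}a_{kj}}{a_{ij}}, $$
so that $c_{ik}+c_{kj}+c_{ij}=x+x^{-1}+y+y^{-1}+\alpha^{-1}xy+\alpha x^{-1}y^{-1}-6=f(x,y)$. The calculus estimate in the proof of Theorem 1 then gives
$$ c_{ik}+c_{kj}+c_{ij}\;\ge\;\frac13\,\frac{{\rm ii}^2(t)}{\sqrt[3]{1-{\rm ii}(t)}}, $$
exactly as there (handling the two cases $\alpha\le 1$ and $\alpha>1$ symmetrically). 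This is essentially the reusable lemma hidden inside the previous proof.

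The combinatorial heart of the argument is the double counting
$$ \sum_{t=(i,k,j)\in T}\bigl(c_{ik}+c_{kj}+c_{ij}\bigr)=(n-2)\sum_{1\le p<q\le n} c_{pq}, $$
since a given ordered pair $p<q$ appears in a triad $i<k<j$ in three mutually exclusive ways (as $(i,k)$, as $(k,j)$, or as $(i,j)$) for a total of $(n-q)+(p-1)+(q-p-1)=n-2$ times, independent of $(p,q)$. Combining this identity with the per-triad bound and equation (4) yields
$$ n(n-2)(\lambda_{\rm max}-n)\;=\;\sum_{t\in T}\bigl(c_{ik}+c_{kj}+c_{ij}\bigr)\;\ge\;\frac13\sum_{t\in T}\frac{{\rm ii}^2(t)}{\sqrt[3]{1-{\rm ii}(t)}}, $$
and dividing by $n(n-2)$ gives the theorem.

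I do not expect any real obstacle: the analytic work is exactly the one triad optimization already carried out in Theorem 1, and the new ingredient is only the uniform $(n-2)$ count of pairs in triads, which is elementary. The one point to be careful about is to note that every $c_{pq}$ is nonnegative so that the regrouping into triads preserves inequalities, and to keep the two cases $\alpha\le 1$ and $\alpha>1$ symmetric when invoking the per-triad estimate.
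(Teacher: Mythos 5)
Your proposal is correct and follows essentially the same route as the paper: the identity from the proof of Theorem 1, the per-triad lower bound $c_{ik}+c_{kj}+c_{ij}\ge \frac13\,{\rm ii}^2(t)/\sqrt[3]{1-{\rm ii}(t)}$ obtained from the same two-variable minimization, and the observation that each pair $p<q$ lies in exactly $n-2$ triads. Your explicit count $(n-q)+(p-1)+(q-p-1)=n-2$ and the remark on nonnegativity of each $c_{pq}$ just make explicit what the paper states in one line.
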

\begin{proof}
Every term $a_{uv}$ with $1\le u<v\le n$ belongs to $n-2$ triads. Therefore the formula (\ref{sum}) implies
\begin{multline*}
(n-2)n\,(\lambda_{\rm max}-n)\\=\sum_{i<k<j}\left [ a_{ik}{w_k\over w_i} + a_{ki}{w_i\over w_k}+a_{kj}{w_j\over w_k} + a_{jk}{w_k\over w_j}
+a_{ij}{w_j\over w_i} + a_{ji}{w_i\over w_j}-6\right ].
\end{multline*}
By the proof of Theorem 1, we get that for $\alpha=\min\{a_{ik}a_{kj}/a_{ij}, \,a_{ij}/a_{ik}a_{kj}  \}$ and $t=(i,k,j)$ we have
\begin{multline*}a_{ik}{w_k\over w_i} + a_{ki}{w_i\over w_k}+a_{kj}{w_j\over w_k} + a_{jk}{w_k\over w_j}
+a_{ij}{w_j\over w_i} + a_{ji}{w_i\over w_j}-6\\ \ge {1\over 3}\alpha^{-1/3}(1-\alpha)^2={1\over 3}\,{ {\rm ii}^2(t)\over{\sqrt[3]{ 1-{\rm ii}(t) }}}.
\end{multline*}
Hence
$$ (n-2)n\,(\lambda_{\rm max}-n)\ge {1\over 3}\sum_{t\in T} { {\rm ii}^2(t)\over{\sqrt[3]{ 1-{\rm ii}(t) }}}.$$
\end{proof}

The $CPC(x,n)$ matrix in the above example shows that for the eigenvalue-based consistency index (CI) an error of an arbitrary value is acceptable for the large enough $n$ (the matrix size). According to AHP theory, the $CPC(x,n)$ matrix is considered ``consistent enough'' (or ``good enough'') for $CI \leq 0.1$, although it has $n$ arbitrarily erroneous elements in it. The number $n$ of the erroneous elements grow to infinity with the growing $n$ and it invalidates using $CI$ for measuring the inconsistency.

\subsection{The interpretation of the $CPC(x,n)$ analysis}

Matrix $CPC(x,n)$ of the size of $3 \times  3$ has only one triad: $(1,x,1)$. Trivially, the only value of $x$ for this matrix to be consistent is 1 ($x=1*1$). For $x=2.62$, we have:

\begin{displaymath}
CPC(2.62,3) = \begin{bmatrix}
 1 & 1 & 2.62 \\
 1 & 1 & 1 \\
0.381679389 & 1 & 1 \\
\end{bmatrix}
\end{displaymath}

The principal eigenvalue of $CPC(2.62,3)$ is 3.10397 hence $CI=0.051985$ and it is less than 10\% of $RI=0.52$, hence acceptable due to the fact that the proposed consistency index (CI) is defined in \cite{Saaty77} as:

\begin{displaymath}
		CI = \frac{\lambda_{max} - n}{n-1}
\end{displaymath}

and the consistency ratio (CR) defined as 

\begin{displaymath}
		CR = \frac{CI}{RI}
\end{displaymath}

\noindent where $RI$ is the average value of $CI$ for random matrices and computed as 0.52 (decreased from 0.58 as stipulated in \cite{Saaty77}).

As previously observed, $x$ should be 1, so $x=2.62$ gives us 262\% error and it is still acceptable for the eigenvalue-based inconsistency. For matrices $3 \times 3$, $RI$ has been computed as 0.5245 hence $CR < 0.1$ for $CPC(2.62,3)$. The acceptable errors for other $n$ from 3 to 7 have been computed and presented in Tab.~\ref{tab:err}\\

\begin{table}[h]
  \centering
  \caption{Maximal errors acceptable by the eigenvalue-based inconsistency for $CPC(x,n)$} 
    \begin{tabular}{cc}
    \hline\hline
    n     & error for (1,x,1) \\
    \hline
    3     & 262\%   \\
    4     & 417\%   \\
    5     & 618\%   \\
    6     & 875\%   \\
    7     & 1,170\%   \\
    \hline
    \end{tabular}%
  \label{tab:err}%
\end{table}%

\begin{figure}[h]
\centering
\includegraphics[scale=1.0]{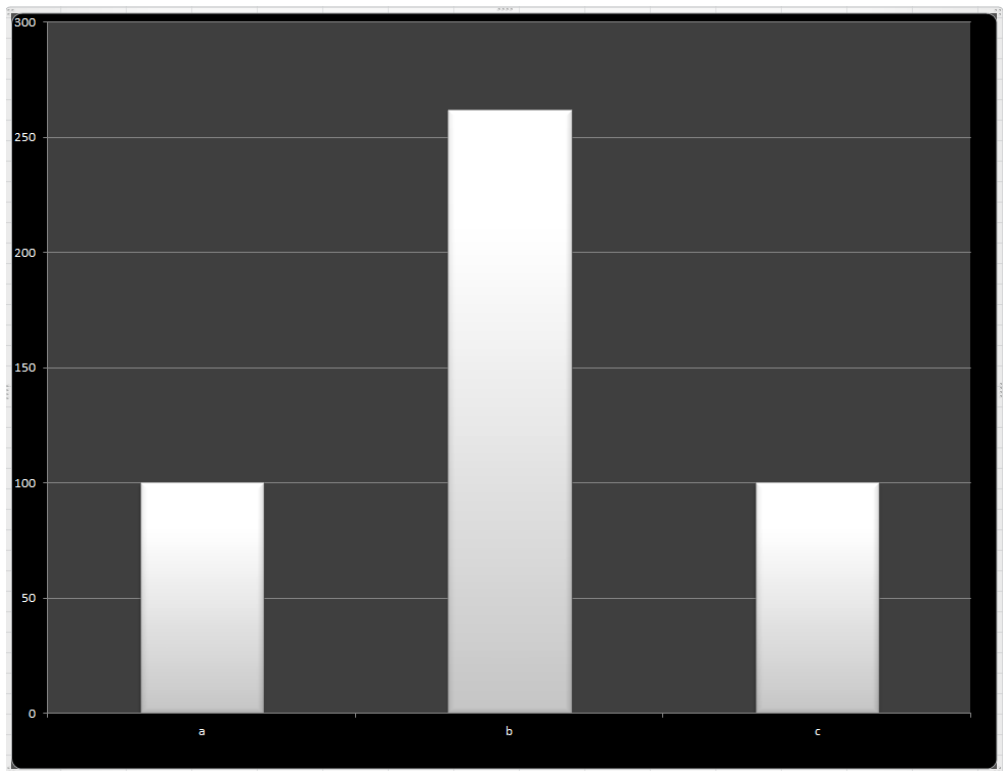}
\caption{Triad $(a,b,c)$ with the 262\% error acceptable by the eigenvalue-based inconsistency for $CPC(2.62,3)$}
\label{fig:bar3}
\end{figure}

\begin{figure}[h]
\centering
\includegraphics[scale=1.0]{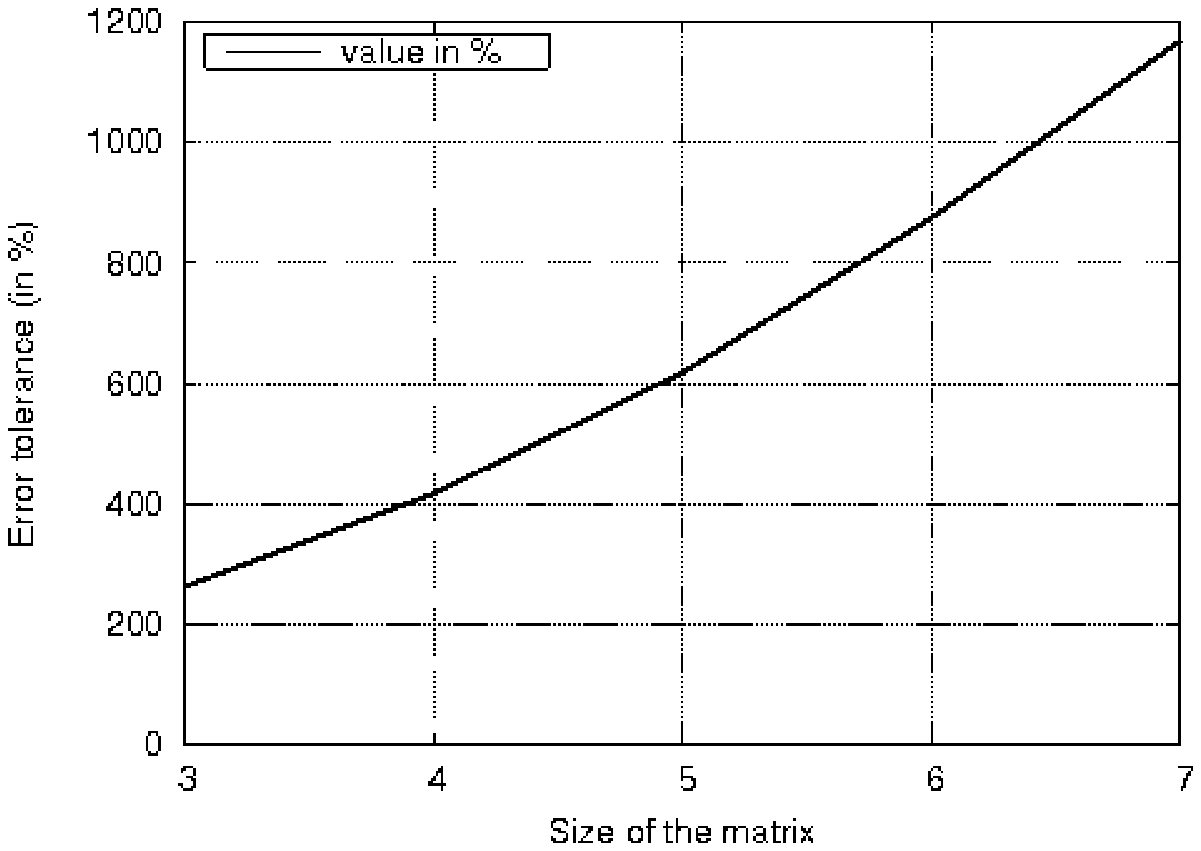}
\caption{Maximal errors acceptable by eigenvalue-based inconsistency for $CPC(x,n)$}
\label{fig:ev-ii}
\end{figure}

$CPC(x,n)$ of the size $n$ by $n$ has $n-2$ triads of this shape: $(1,x,1)$. All triads are formed from these matrix elements  $(a_{ij},a_{ik},a_{jk})$ based the consistency condition is $a_{ik} = a_{ij}* a_{jk}$. Not only the equality does not hold for $x>1$ but for $a_{ij}= a_{jk}=1$ and $x=a_{ij}* a_{jk}$ the inaccuracy grows with the growing $x$. For $CPC(2.62,3)$, it is illustrated by Fig.~\ref{fig:bar3}. The question is evident: ``Would you consider such three bars are equal?'' and if the answer is not, ``why AHP considers such error as acceptable?''

Values $x$ can be an arbitrarily large value which creates a problem. Assuming that the exact values are set to $a_{ij}= a_{jk} =1$, the value $x$ is computed as $a_{ij} * a_{jk} =1$ hence the  error for $x$ is $x/(1*1)$ hence $x$ or $x*100$\%. For example, for $n=7$, $x=4.25$ the error is 1,170\%. However, $x$ can be 1,000,000\%, or more since in Section~\ref{sec:cpc}, the proof has been provided that there is such $n$ for which $CI \leq 0.1$ hence acceptable. The 10\% threshold, originally set as ``the consistency rule'' in \cite{Saaty77} and later on slightly decreased for larger $n$ but it does not matter for the inequality (3) in Section~\ref{sec:cpc} if it is 10\% or any other fixed value.

According the the results in Section~\ref{sec:cpc}, there is always such $n$ for which the deviation of the principal eigenvalue from $n$ is small enough to consider $CPC(x,n)$ matrix acceptable while the arbitrarily large $x$ has $n-2$ triads with an unacceptably high error $x$. 

The distance-based inconsistency was introduced in \cite{Kocz93} and independently analyzed in \cite{BR2008}. 
Its convergence analysis was published in \cite{KS2010}. 
Evidently, it does not accept big values of $x$ in triads $(1,x,1)$. It specifically  postulates to re-examine input data for $ii>1/3$, hence $x>1.5$ is proclaimed to be suspiciously high and the PC matrix needs to be re-examined.

\section{The analysis of $FPC(x,n)$ matrix}
\label{sec:fpc}

We have feared that some of the AHP supporters may hold to the last hope by believing that ``it is only one value in the $CPC(x,n)$ matrix'' since it has x in one matrix element (in fact, $x^{-1}$ in another corner). However, we have a surprise for them by what we call $FPC$ (the ``full'' pairwise comparisons matrix or the PC matrix full of $x$). Unlike $CPC(x,n)$, it has all erroneous triads. 

Consider the matrix $FPC(x,n),$ with $x>1,$ defined by
\begin{displaymath}
FPC(x,n)=\begin{bmatrix}1&x&\cdots&x&x\\
x^{-1}&1&\cdots&x&x\\
\vdots &\vdots &\ddots&\vdots &\vdots\\
x^{-1}&x^{-1}&\cdots&1&x\\
x^{-1}&x^{-1}&\cdots&x^{-1}&1\\\end{bmatrix}\in M_{n\times n}(\mathbb R)
\end{displaymath}

\noindent Let $w$ be the eigenvector corresponding to the principal eigenvalue $\lambda_{max}$. Thus $$x^{-1}(w_1+\ldots+w_{k-1})+w_k +x(w_{k+1}+\ldots+w_n)=\lambda w_k$$ for $k=1,2,\ldots,n$. \\

\noindent Let us notice that for $k=1$, the first term is missing while for $k=n$, the last term is missing.
By subtracting equations corresponding to $k$ and $k-1$, the following holds:

$$x^{-1}w_{k-1}+w_k-w_{k-1}-xw_k=\lambda w_k-\lambda w_{k-1}$$
which gives
$$w_k=w_{k-1} {x^{-1} -1 + \lambda \over x -1 + \lambda}$$
for $k=2,\ldots,n$. \\

\noindent hence

$$w_k= \left( {x^{-1}-1 +\lambda \over x -1 +\lambda }\right)^{k-1}$$

\noindent for $k=1,2,\ldots,n$. \\

\noindent Substituting it into the first equation results in \\

$$1+x(w_2+w_2^2+\ldots+w_2^{n-1})=\lambda $$ 

\noindent hence

$$1+x{w_2^n -w_2 \over w_2-1}=\lambda$$

\noindent by using 
$$ w_2={x^{-1}-1+\lambda \over x-1+\lambda} $$

\noindent and by transforming the last equation, the following equation is obtained:

$$ \left( {x^{-1}-1+\lambda \over x-1+\lambda}\right) ^n = \dfrac{1}{x^2}$$

\noindent therefore 

$$ \lambda = {x-1 \over x} {x+x^{\frac{2}{n}} \over x^{\frac{2}{n}} -1}$$

\noindent \textbf{Example:}\\

\noindent For $x=2.25$ and $n=4$, we have $\lambda_{max}=\frac{25}{6}$

\noindent Thus
$$ {\lambda_{max}-n \over n-1 }= {\frac{25}{6} -4 \over 3} = \frac{1}{18} \approx 0.055555556$$

\noindent therefore 225\% error is still considered as acceptable by AHP theory for $n=4$. The soundness of entering three inaccurate (by 55.6\%) comparisons into the matrix $FPC(x,n)$ and claiming that such matrix is acceptable is left to the reader for his/her evaluation.
 
\noindent For $x=2.84$ and $n=7$, the error increases to 64.79\%. These errors although a bit less impressive than for $CPC(x,n)$ are still by far too high for the estimation lengths of randomly generated bars as it was demonstrated by a Monte Carlo Study in \cite{K1998} where a 5\% error was reported. The error 284\% is bigger than 262\% illustrated in Fig.\ref{fig:bar3}. This study considers it unacceptable. The question is if it is reasonable to consider three bars in Fig.\ref{fig:bar3} as ``equal enough''. The only similar equality of this kind, which comes to our minds is: ``All animals are equal, but some animals are more equal than others.'' [George Orwell, Animal Farm].

\section{Conclusions}

The presented inconsistency axiomatization is simple, elegant, a considerable step forward and a sound mathematical foundation for the further PC research. 
It finally allows us to define proper inconsistency indicators, regardless of whether or not they are localizing the inconsistency or serve as global indicators of inconsistencies in pairwise comparisons matrices. The distance-based inconsistency definition localizes inconsistency and produces correct results. 

The eigenvalue-based consistency index (CI) fails to increase with the growing size of the PC matrix and it has the growing number of triads with each of them having an unacceptable level of inconsistency. As proven in Section~\ref{sec:cpc}, AHP thresholds (both old and recently modified) are unable to detect large quantities of large inaccuracies existing in $CPC(x,n)$ matrices. There is always $n$, for which these inaccuracies are lost in the matrix, no matter how large they are. The discussed eigenvalue-based inconsistency indicator is not precise enough for the detection of individual triads, which turns to be erroneous but ``averaged'' by the eigenvalue processing. It is anticipated that every statistical inconsistency indicator, including those with roots in the principal eigenvalue, may not be good indicators of the problems existing in pairwise comparisons. Simply, they do not look deep enough into relationships existing in cycles of which triads are the most important minimal cycles (as pointed out in this study, one or two elements cannot create an inconsistency cycle).
Hopefully, proponents of other inconsistency indicators will examine their definition by using the proposed axiomatization. Certainly, getting help from authors of this study is a vital solution. 

During the final stages of editing of our study for publication,  the numerical results strongly supporting our finding were located in \cite{XDX2008} with the following text in the conclusions: 

\begin{quotation}
``In this paper, by simulation analysis, we obtain the following
result: as the matrix size increases, the percent of the
matrices with acceptable consistency ($CR \leq 0.1$), decrease
dramatically, but, on the other hand, there will be more and
more contradictory judgments in these sufficiently consistent
matrices. This paradox shows that it is impossible to
find some proper critical values of CR for different matrix
sizes. Thus we argue that Saaty’s consistency test could be
unreasonable.''
\end{quotation}

\noindent It is not a paradox anymore. In this study, the mathematical proof and reasoning for it have been provided.

\section*{Acknowledgment}

This research has been partially supported by the Provincial Government through the Northern Ontario Heritage Fund Corporation and by the Euro Grant Human Capital. The authors would like to thank  Grant O. Duncan (a part-time graduate student at Laurentian University; BI Leader, Health North Sciences, Sudbury, Ontario) for his help with the editorial improvements.


\end{document}